\documentclass[a4paper]{amsart}
\usepackage[foot]{amsaddr}
\usepackage[T1]{fontenc}
\usepackage{fullpage}
\usepackage{amsmath,amsthm,amssymb,amscd,amsfonts}
\usepackage{mathtools}
\usepackage[dvipsnames]{xcolor}
\definecolor{darkgreen}{rgb}{0,0.5,0}
\definecolor{darkblue}{rgb}{0,0,0.5}
\usepackage[colorlinks=true,urlcolor=darkblue,citecolor=darkgreen,linkcolor=darkblue]{hyperref}
\usepackage{graphicx}
\usepackage{lmodern}
\PassOptionsToPackage{hyphens}{url}

\newcommand{\bbF}{\mathbb{F}}
\newcommand{\GLn}{\mathrm{GL}_n(\bbF_2)}
\newcommand{\additive}{\mathsf{L}}
\newcommand{\reduced}{\mathsf{L}^*}
\newcommand{\ind}{\mathrm{ind}}
\newcommand{\Ut}{U^{\!\top}}
\usepackage[sc]{mathpazo}
\newtheorem{thm}{Theorem}[section]
\newtheorem{prop}[thm]{Proposition}
\newtheorem{lem}[thm]{Lemma}

\theoremstyle{remark}
\newtheorem{remx}[thm]{Remark}

\theoremstyle{definition}
\newtheorem{defn}[thm]{Definition}

\newcommand{\qedblack}{\hfill \scalebox{0.7}{\ensuremath{\blacksquare}}}
\newenvironment{rem}
  {\pushQED{\qedblack}\begin{remx}}
  {\popQED\end{remx}}
\begin{document}

\title{Lower bounds for the CNOT-complexity of linear reversible operators}

\author{S\o ren Fuglede J\o rgensen}
  \address{Kvantify, Rosenvængets Allé 25, DK-2100 Copenhagen, Denmark}
  \email{sfj@kvantify.dk}

\date{\today}

\begin{abstract}
    The CNOT-complexity of an invertible matrix over $\bbF_2$ is the minimum number of CNOT gates needed to synthesize the corresponding linear reversible operator. While the maximum CNOT-complexity over all $n \times n$ matrices is known to be $\Theta(n^2 / \log n)$, no explicit family of matrices requiring a superlinear number of CNOT gates is known, and until now the hardest explicitly known family has been the cyclic permutations, with CNOT-complexity $3(n-1)$. We show that lower bounds for the additive complexity of not-necessarily-reversible linear operators can be lifted to the reversible setting with only a small loss. As an application, we use this to describe an explicit family of matrices, constructed from parity-check matrices of error-correcting codes, with CNOT-complexity at least $4n - o(n)$, asymptotically surpassing the cyclic permutations. Moreover, this construction yields an explicit matrix $A \in \mathrm{GL}_{n}(\bbF_2)$, $n = 17167$, whose CNOT-complexity exceeds that of the cyclic permutation on $n$ symbols.
\end{abstract}

\maketitle


\section{Introduction}
Let $\bbF_2 = \{0, 1\}$ be the field with two elements and let $\GLn$ be the group of invertible $n \times n$ matrices over $\bbF_2$. For $i, j \in \{1, \dots, n\}$ with $i \neq j$, let $E_{i, j} \in \bbF_2^{n \times n}$ be the matrix with a single $1$ in the $(i, j)$-th entry, and let $T_{i,j} = I + E_{i, j} \in \GLn$ be the corresponding elementary matrix, also referred to as a transvection.

Any matrix $A \in \GLn$ can be expressed as a product of transvections, for example by Gaussian elimination, and the \emph{CNOT-complexity} $d(A)$ of $A$ is defined to be the minimum number of transvections in any such product.

In reversible and quantum computing, each transvection $T_{i, j}$ corresponds to a CNOT gate with control qubit $j$ and target qubit $i$, and the CNOT-complexity of $A$ is the minimum number of CNOT gates needed to implement the linear reversible operator corresponding to $A$. A collection of CNOT gates is also called a linear reversible circuit, and the process of finding a product of transvections equal to $A$ is referred to as \emph{synthesizing} a linear reversible circuit for $A$. Minimizing the number of CNOT gates that implement a given linear reversible operator is an important practical problem. For example, peephole optimization of more general quantum circuits consisting of CNOT and single-qubit gates involves identifying and minimizing CNOT-only subcircuits. Reducing these subcircuits decreases the overall number of two-qubit gates and typically reduces the circuit's overall error rate. Accordingly, considerable research has been devoted to algorithms and heuristics for minimizing CNOT gate counts in linear reversible circuits \cite{Kazuo2002, Patel2008, Amy2013, Meuli2018, Amy2018, Brugiere2020, Brugiere2021, Bataille2022, Schneider2022, Gheorghiu2023, Shaik2024, Bu2025, Christensen2025, Webster2025}.

The present work concerns lower bounds on the CNOT-complexity of families of linear reversible operators, particularly the problem of finding explicit families with high CNOT-complexity. Patel, Markov, and Hayes \cite{Patel2008} established that
\[
    \max_{A \in \GLn} d(A) = \Theta(n^2 / \log n),
\]
but did not provide an explicit family of matrices with CNOT-complexity $\Theta(n^2 / \log n)$. Indeed, finding any family requiring a superlinear number of CNOT gates is an open problem believed to be difficult \cite{Gowers2011}.

To our knowledge, the hardest explicitly known family $P_n$ consists of the permutation matrices corresponding to the cyclic permutations $(1, 2, \dots, n)$. Such a permutation is the product of $n - 1$ transpositions, each of which can be expressed as a product of three transvections, and thus $d(P_n) \leq 3(n-1)$. It is natural to ask whether this example is extremal for each fixed $n$, and hence whether $3(n-1)$ is an upper bound on the CNOT-complexity of all elements of $\GLn$. Recent work of Bu, Fan, and Joo \cite{Bu2025} shows that $d(P_n) = 3(n-1)$, and results in \cite{Bataille2022, Christensen2025} show that for $n \leq 7$, the maximum CNOT-complexity is attained by a permutation matrix, but that there exist an $n_0$ with $8 \leq n_0 \leq 20$ and a matrix $A_{n_0} \in \mathrm{GL}_{n_0}(\bbF_2)$ such that $d(A_{n_0}) > 3(n_0-1)$. In fact, a counting argument shows that, asymptotically, most matrices in $\GLn$ have CNOT-complexity greater than $3(n-1)$, yet no explicit family has been described.

In this work, we describe an explicit family of matrices $A_n$, constructed from the parity-check matrices of error-correcting codes, that satisfy $d(A_n) \geq 4n - o(n)$, asymptotically surpassing the CNOT-complexity of the cyclic permutations. Moreover, this construction gives rise to an explicit matrix $A \in \mathrm{GL}_{17167}(\bbF_2)$ with $d(A) > 3 \cdot (17167 - 1)$.

Our results build on recent work \cite{Sergeev2025} on the related problem of establishing lower bounds on the additive complexity of linear operators that are not necessarily reversible. We show that such bounds can be lifted to the reversible setting with only a small loss. Since proving lower bounds may be easier in the non-reversible setting, this yields a new approach to proving lower bounds on the CNOT-complexity of linear reversible operators. Because the concrete example relies only on general properties of error-correcting codes, the results may also be of practical interest and may provide tools for establishing lower bounds on the gate counts of encoding and syndrome-extraction circuits.

\subsection*{Acknowledgements}
This work was funded by the Innovation Fund Denmark (Grand Solutions) -- grant no. 5366-00005B.

\section{Background on additive complexity}
As noted in the introduction, we will be relying on results from the literature on additive complexity of linear operators that are not necessarily reversible. These are implemented by circuits that still consist of gates that compute the sum (XOR) in $\bbF_2$ of two input bits. Unlike the reversible setting, the total number of input and output bits may differ as output bits may be discarded, and we allow arbitrary fan-out and swapping of bits, as well as auxiliary ``scratch'' values. Let $\oplus$ denote addition in $\bbF_2$.
\begin{defn}
    An \emph{additive circuit} on inputs $x_1, \dots, x_n$ is a finite sequence of \emph{gates} $g_1, g_2, \dots, g_s$; each gate is an ordered pair of earlier values, $g_k = u \oplus v$, where $u, v \in \{x_1, \dots, x_n, g_1, \dots, g_{k-1}\}$, computing their sum in $\bbF_2$. When $n=0$, we regard the unique linear form on $\bbF_2^0$, namely zero, as initially available.

    The circuit \emph{computes} a matrix $A \in \bbF_2^{m \times n}$ if among $x_1, \dots, x_n, g_1, \dots, g_s$ one can designate $m$ values equal to $(Ax)_1, \dots, (Ax)_m$ for any input $x = (x_1, \dots, x_n) \in \bbF_2^n$. Its \emph{size} is the number $s$ of gates, and the \emph{additive complexity} of $A$ is defined as the smallest size of any additive circuit computing $A$,
    \[
        \additive(A) = \min\{s \mid \text{there exists an additive circuit of size $s$ computing $A$}\}.
    \]
\end{defn}
As is the case \cite{Patel2008} for CNOT-complexity, the maximum additive complexity of $n \times n$ matrices is known to be $\Theta(n^2 / \log n)$, yet no explicit family with a superlinear lower bound is known \cite{Jukna2013}.
Indeed, the recent work of Sergeev \cite{Sergeev2025} improves the best known lower bound for additive complexity of an explicit family of matrices from $3n - o(n)$ to $5n - o(n)$.

Now, any linear reversible circuit also defines an additive circuit computing the same matrix, using the same number of gates: each CNOT gate creates the sum of the two current wire values and updates the target's designated value. In particular, this holds for an optimal linear reversible circuit, which proves the following:

\begin{prop}
  For any $A \in \GLn$, we have $\additive(A) \leq d(A)$.
\end{prop}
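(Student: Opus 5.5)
The plan is to simulate an optimal linear reversible circuit for $A$ by an additive circuit with the same number of gates. Let $s = d(A)$ and write $A = T_{i_s, j_s} \cdots T_{i_1, j_1}$, so the transvection $T_{i_1,j_1}$ acts first on the input vector $x$. We track the linear forms held on the $n$ wires. Before any gate is applied, wire $\ell$ holds $w^{(0)}_\ell = x_\ell$. After the $k$-th transvection $T_{i_k,j_k}$, the new state is $w^{(k)}_{i_k} = w^{(k-1)}_{i_k} \oplus w^{(k-1)}_{j_k}$, and $w^{(k)}_\ell = w^{(k-1)}_\ell$ for every $\ell \neq i_k$. By induction on $k$, the state vector satisfies $w^{(k)} = T_{i_k,j_k}\cdots T_{i_1,j_1}x$. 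In particular $w^{(s)} = Ax$.

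Next, we build the additive circuit in parallel with this process. We maintain, for each wire $\ell$ and each step $k$, a pointer $p_k(\ell) \in \{x_1,\dots,x_n,g_1,\dots,g_k\}$ to a value equal to $w^{(k)}_\ell$ as a linear form in $x$. Initially $p_0(\ell) = x_\ell$. At step $k$ we add the gate $g_k = p_{k-1}(i_k) \oplus p_{k-1}(j_k)$. This is legal: both operands lie in $\{x_1,\dots,x_n,g_1,\dots,g_{k-1}\}$, and they are distinct values since $i_k \neq j_k$, although the definition does not even require distinctness. We then set $p_k(i_k) = g_k$ and $p_k(\ell) = p_{k-1}(\ell)$ for $\ell \neq i_k$. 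The invariant that $p_k(\ell)$ equals $w^{(k)}_\ell$ for all inputs follows immediately from the update rule above.

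At the end, the values $p_s(1),\dots,p_s(n)$ equal $(Ax)_1,\dots,(Ax)_n$, so we designate them as outputs. The resulting additive circuit computes $A$ and has exactly $s$ gates, hence $\additive(A) \le s = d(A)$.

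There is no real obstacle here. The only points needing care are the orientation of the product, that is, which transvection acts first, and the observation that overwriting a wire corresponds to redirecting a pointer rather than deleting a value. Both are handled by the invariant.
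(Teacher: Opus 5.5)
Your proof is correct and follows the same argument as the paper. The paper also simulates an optimal CNOT circuit gate by gate: each CNOT yields one additive gate summing the current values on its two wires, and that gate becomes the target wire's designated value. Your pointer invariant just makes explicit the bookkeeping (product orientation, redirecting rather than deleting values) that the paper leaves implicit.
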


\section{Lower bounds for linear reversible operators}
Our aim is to extend the lower bounds for additive complexity to the reversible setting. To do so, we first consider the effect of padding matrices to make them square and invertible, and later we will consider modifications to the procedure that allow us to lift lower bounds on complexity.
\begin{lem}
    \label{lem:submatrix}
    If $A$ is a submatrix of $M \in \bbF_2^{m \times n}$, then $\additive(A) \leq \additive(M)+1$.
\end{lem}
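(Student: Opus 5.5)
The plan is to take an optimal additive circuit for $M$ and restrict it to the rows and columns of $A$. Suppose $A$ consists of the rows indexed by $R \subseteq \{1,\dots,m\}$ and the columns indexed by $C \subseteq \{1,\dots,n\}$. Fix an additive circuit $g_1,\dots,g_s$ computing $M$ with $s = \additive(M)$.

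\emph{Rows.} Row selection is free: a circuit computing $M$ also computes the rows in $R$, since we simply designate fewer outputs.

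\emph{Columns.} Here the idea is to substitute $x_j = 0$ for every $j \notin C$. After this substitution every value of the original circuit becomes a linear form in the variables $x_j$, $j \in C$. I would build a new circuit on the inputs $(x_j)_{j\in C}$ gate by gate, maintaining the following invariant: each original value (input or gate), after substitution, is either the zero form or equal to some value already available in the new circuit. Inputs $x_j$ with $j \in C$ map to themselves, and those with $j\notin C$ map to $0$. For a gate $g_k = u \oplus v$ there are two cases.
\begin{itemize}
\item If one of $u, v$ has become $0$, then $g_k$ equals the other operand, which is $0$ or already available, so no gate is added.
\item If both operands correspond to values $u', v'$ in the new circuit, we add the gate $u' \oplus v'$.
\end{itemize}
This uses at most $s$ gates. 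Every row of $A$ is then either available as a value of the new circuit, or it is the zero row.

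\emph{The zero row.} This is the only genuine obstacle, and it is where the $+1$ comes from. If some row of $A$ is zero and no value of the new circuit is identically zero, we add one extra gate $x_j \oplus x_j$ for some $j \in C$. If $C = \emptyset$, no extra gate is needed, because the convention in the definition makes zero initially available. The result is a circuit of size at most $s+1$ computing $A$, so $\additive(A) \le \additive(M) + 1$.

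The main point to check carefully is the invariant, specifically that a kept gate never needs an operand that was dropped. This holds because every dropped value equals either zero or an earlier available value, so every operand of a kept gate has a representative in the new circuit.
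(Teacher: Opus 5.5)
Your proof is correct and uses the same idea as the paper: substitute zero for the deleted inputs, restrict an optimal circuit for $M$, and designate only the retained rows. The only difference is bookkeeping. The paper spends the extra gate upfront on a zero wire $z = y_1 \oplus y_1$ and replays every gate verbatim. You instead propagate zeros symbolically, skip gates with a zero operand, and pay the $+1$ only when a zero row of $A$ needs a designated zero value. As a side effect, your version shows $\additive(A) \le \additive(M)$ whenever $A$ has no zero rows.
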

\begin{proof}
    Let $I$ and $J$ be the sets of indices of the rows and columns of $M$ retained in $A$, and let $C$ be an optimal additive circuit computing $M$. If $J=\varnothing$, then $A$ has no columns, so each of its rows is the unique linear form on $\bbF_2^0$. By the convention in the definition, $\additive(A)=0$, and the result follows.

    Suppose now that $J\neq\varnothing$, and write $y_1,\dots,y_{\lvert J\rvert}$ for the inputs of a circuit for $A$, indexed in the same order as the elements of $J$; denote the position of $j\in J$ in this order by $\ell(j)$. First use one gate to compute $z=y_1\oplus y_1=0$. Replay the gates of $C$ in their original order, replacing each retained input $x_j$, $j\in J$, by $y_{\ell(j)}$, and each deleted input $x_j$, $j\notin J$, by $z$. Inductively, every replayed gate computes the restriction of the corresponding value of $C$ obtained by setting the deleted inputs to zero. In particular, for every $i\in I$, the value corresponding to the $i$-th designated output of $C$ is
    \[
      \sum_{j\in J}M_{i,j}y_{\ell(j)}=(Ay)_i.
    \]
    Designating these values as the outputs gives a circuit for $A$ with at most $\additive(M)+1$ gates.
\end{proof}
Now, any square matrix can be embedded as a submatrix of an invertible matrix by extending its row and column spaces to a basis.
\begin{lem}
  \label{lem:embedding}
  Let $A \in \bbF_2^{n \times n}$ have corank $s = n - \mathrm{rank}(A)$. Then $A$ is the top-left $n \times n$ block of an invertible matrix $M \in \mathrm{GL}_N(\bbF_2)$ with $N = n + s$.
\end{lem}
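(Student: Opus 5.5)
We construct $M$ explicitly in block form
\[
  M = \begin{pmatrix} A & B \\ C & 0 \end{pmatrix},
\]
with $B \in \bbF_2^{n \times s}$ and $C \in \bbF_2^{s \times n}$, and choose the blocks so that $M$ is invertible. Let $r = \mathrm{rank}(A) = n - s$. The column space $\mathrm{col}(A) \subseteq \bbF_2^n$ has dimension $r$, so we can extend a basis of it by $s$ vectors $b_1, \dots, b_s$ to a basis of $\bbF_2^n$; take these as the columns of $B$. Dually, the row space of $A$ has dimension $r$, and we extend it by $s$ row vectors $c_1, \dots, c_s$ to a basis of $\bbF_2^{1 \times n}$; take these as the rows of $C$. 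The lower-right block may be taken to be zero.

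To show $M$ is invertible, we check that its kernel is trivial. Suppose $M\binom{x}{y} = 0$ with $x \in \bbF_2^n$ and $y \in \bbF_2^s$. The top block row gives $Ax + By = 0$. Since $Ax \in \mathrm{col}(A)$ and the columns of $B$ span a complement of $\mathrm{col}(A)$ (the combined set being a basis), we get $By = 0$ and $Ax = 0$. Because the $b_i$ are linearly independent, $By = 0$ forces $y = 0$.

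The bottom block row gives $Cx = 0$. Together with $Ax = 0$, this says $x$ is orthogonal to every row of $A$ and every row of $C$. These rows together span $\bbF_2^{1\times n}$, so $x = 0$. Hence $M$ is invertible, and $A$ is its top-left $n \times n$ block.

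The only step needing care is the direct-sum argument $Ax + By = 0 \Rightarrow Ax = By = 0$, which follows from the extended set being a basis. The rest is routine linear algebra; the degenerate case $s = 0$ gives $M = A$.
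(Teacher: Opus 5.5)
Your proof is correct and takes the same approach as the paper. The paper justifies this lemma only by the one-line remark that one extends the row and column spaces of $A$ to bases, and your block matrix $\begin{pmatrix} A & B \\ C & 0 \end{pmatrix}$, together with the trivial-kernel check, is a rigorous version of exactly that remark.
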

Now suppose that $\{A_{n}\}$ is a family of matrices with $\additive(A_{n}) \geq cn - o(n)$ for some $c > 0$ and corank $s = \gamma_n n$. Then by the above lemmas, we can construct a family of invertible matrices $\{M_{N}\}_{N \in I}$ such that for every $N \in I$,
\begin{align}
  \label{eq:embedding-bound}
  d(M_{N}) \geq \additive(A_{n}) - 1 \geq cn - o(n) = \frac{c}{1 + \gamma_n} N - o(N).
\end{align}
As described in the previous section, \cite{Sergeev2025} gives a family of matrices $A_n$ with $c = 5$. However, without also bounding the corank, the above procedure would not suffice to give a family whose CNOT-complexity exceeds that of the cyclic permutations.
\begin{lem}
  \label{lem:add-permutation}
  For $A, B \in \bbF_2^{n \times n}$, we have
  \[
    \additive(A + B) \leq \additive(A) + \additive(B) + n.
  \]
  Moreover, any permutation matrix $P$ has $\additive(P) = 0$ and hence $\additive(A + P) \geq \additive(A) - n$.
\end{lem}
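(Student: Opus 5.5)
The first inequality will come from running two optimal circuits side by side and then combining their outputs. Let $C_A$ and $C_B$ be optimal additive circuits for $A$ and $B$ on the same inputs $x_1,\dots,x_n$. Concatenate the gate sequences of $C_A$ and $C_B$; this is legal because every gate of $C_B$ refers only to inputs or to earlier gates of $C_B$. Afterwards, for each $i=1,\dots,n$, the values $(Ax)_i$ and $(Bx)_i$ are both available among the computed values. Add one gate $(Ax)_i\oplus(Bx)_i$ for each $i$; it equals $((A+B)x)_i$. Designating these $n$ new gates as outputs gives a circuit of size $\additive(A)+\additive(B)+n$.

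One subtlety needs checking: a designated output may be an input variable, or even the constant $0$ in degenerate cases. A gate $u\oplus v$ may take inputs as $u$ or $v$, so inputs are fine. A zero row of $A$ or $B$ requires the value $0$ to be available. I would treat zero rows separately: if the $i$-th row of $A$ is zero, then $((A+B)x)_i=(Bx)_i$ is already available and needs no new gate. If both rows are zero, then for $n\ge1$ the value $0$ can be produced as $x_1\oplus x_1$ and charged to the $n$ budgeted gates for that row; when $n=0$ the convention makes $0$ available. So in every case at most one extra gate per row is used, and the bound $n$ still holds.

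For the second statement, if $P$ is a permutation matrix then $(Px)_i=x_{\pi(i)}$, so every output is an input and can be designated with no gates; hence $\additive(P)=0$. Over $\bbF_2$ we have $A=(A+P)+P$. Applying the first inequality to the pair $A+P$ and $P$ gives $\additive(A)\le\additive(A+P)+0+n$, which rearranges to $\additive(A+P)\ge\additive(A)-n$.

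The only delicate point is the bookkeeping for zero rows and constant outputs in the first part; the rest is direct.
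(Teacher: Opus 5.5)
Your proof is correct and follows the paper's argument. You run optimal circuits for $A$ and $B$, spend $n$ gates combining their designated outputs coordinatewise, and then apply this to $A=(A+P)+P$ using the gate-free circuit for $P$. Your zero-row bookkeeping is harmless but unnecessary: by the definition of computing, any designated output of $C_A$ or $C_B$ that equals $0$ is already an available value, so the combining gate can always be formed directly.
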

\begin{proof}
    Let $C_A$ and $C_B$ be optimal additive circuits for $A$ and $B$ respectively. Then, use $n$ additional gates to combine their designated outputs coordinate-wise. The resulting circuit computes $A + B$.

    Let $P$ be a permutation matrix. Then any gate-free circuit computes $P$ by designating the outputs according to the permutation, and from the first part of the lemma,
    \[
      \additive(A) = \additive((A + P) + P) \leq \additive(A + P) + n. \qedhere
    \]
\end{proof}
In other words, adding a permutation matrix to a matrix $A$ does not change its additive complexity by more than $n$, yet it can change its rank by as much as $n$. In particular, the following permutation matrix will be useful:
\begin{lem}
  \label{lem:permutation-matrix}
  For $U \in \bbF_2^{a \times b}$, write $A_U=\begin{psmallmatrix}U&0\\0&\Ut\end{psmallmatrix} \in \bbF_2^{n \times n}$, $n = a +b$. Let
  \[
    P_0 = \begin{pmatrix} 0 & I_a \\ I_b & 0 \end{pmatrix}, \quad M_0 = A_U + P_0 = \begin{pmatrix} U & I_a \\ I_b & \Ut \end{pmatrix}.
  \]
  Then $\ker M_0 \cong \ker (I_b + \Ut U)$ and $\mathrm{corank}(M_0) \leq b$.
\end{lem}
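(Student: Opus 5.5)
Write vectors in $\bbF_2^n$ as $(x, y)$ with $x \in \bbF_2^b$ and $y \in \bbF_2^a$, matching the column blocks of $M_0$. The plan is to solve $M_0 (x,y)^\top = 0$ explicitly, eliminating $y$ to get an isomorphism onto $\ker(I_b + \Ut U)$; the corank bound then follows by dimension counting.

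\emph{Step 1: write out the kernel equations.} The condition $(x, y) \in \ker M_0$ reads
\[
  Ux + y = 0, \qquad x + \Ut y = 0 .
\]
Over $\bbF_2$ the first equation gives $y = Ux$. Substituting into the second gives $x + \Ut U x = 0$, that is, $(I_b + \Ut U)x = 0$.

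\emph{Step 2: define the maps.} Let $\varphi \colon \ker M_0 \to \ker(I_b + \Ut U)$ be the projection $(x,y) \mapsto x$, and let $\psi \colon \ker(I_b + \Ut U) \to \ker M_0$ be $x \mapsto (x, Ux)$.
\begin{itemize}
  \item By Step 1, $\varphi$ is well defined.
  \item To see that $\psi$ lands in $\ker M_0$: the first equation holds trivially, and the second is exactly $(I_b+\Ut U)x = 0$.
  \item Clearly $\varphi \circ \psi = \mathrm{id}$.
  \item Also $\psi \circ \varphi = \mathrm{id}$, since any kernel element has $y = Ux$.
\end{itemize}
Both maps are linear, so $\ker M_0 \cong \ker(I_b + \Ut U)$.

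\emph{Step 3: bound the corank.} We have $\mathrm{corank}(M_0) = \dim \ker M_0 = \dim \ker(I_b + \Ut U)$. Since $I_b + \Ut U$ is a $b \times b$ matrix, this dimension is at most $b$.

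There is no real obstacle here. The only care needed is the ordering of the blocks: the first block column of $M_0$ has width $b$, since $U$ is $a \times b$. With that fixed, the substitution step is where the content lies, and it is routine.
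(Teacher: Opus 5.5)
Your proof is correct and takes the same approach as the paper: projection onto the first $b$ coordinates, with inverse $x \mapsto (x, Ux)$, followed by the dimension count $\dim\ker(I_b+\Ut U)\le b$. The only difference is that you write out the kernel equations $Ux+y=0$ and $x+\Ut y=0$, which the paper leaves implicit.
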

\begin{proof}
  Projection onto the first $b$ coordinates gives an isomorphism \mbox{$\ker M_0 \to \ker(I_b + \Ut U)$}, with inverse $x \mapsto (x, U x)$, and
  \[
    \mathrm{corank}(M_0) = \dim \ker M_0 = \dim \ker(I_b + \Ut U) \leq b. \qedhere
  \]
\end{proof}

\section{Asymptotically hard instances}
We are now in a position to construct the explicit hard instances as described in the introduction. Here, we consider a matrix $A$ to be ``hard'' if it has CNOT-complexity $d(A) > 3(n-1)$. Recall also that asymptotically, most matrices in $\GLn$ are hard in this sense, yet no explicit families have been described. Here, to be precise, we consider a family $\{A_n\}_{n \in I}$, $I \subseteq \mathbb{N}$, of matrices to be explicit if there is a deterministic algorithm that for given $n \in I$ and a pair of indices $(i, j)$, outputs $(A_n)_{i, j}$ in time polynomial in $n$.

In \cite{Sergeev2025}, it is noted that it is possible to bound the additive complexity of a matrix in terms of the independence index of its rows. Here, for an indexed family $v = (v_i)_{i \in J}$ of vectors $v_i \in \bbF_2^m$, let its \emph{independence index}, $\ind(v)$, denote the maximal number $k \leq \lvert J \rvert$ such that every subset of $k$ distinct indices selects linearly independent vectors. For a matrix $B \in \bbF_2^{m \times n}$, let $\ind(B)$ denote the independence index of its rows.

By the previous section, to find families of matrices with high CNOT-complexity, we can therefore aim to find families of matrices $U$ with high independence index, whose row count $a$ is asymptotically larger than the column count $b$, and then consider the family of matrices $M_0$ as above.

In \cite{Sergeev2025}, it is furthermore noted that matrices with high independence index can be obtained from the theory of linear codes, and that the BCH code in particular provides matrices with high additive complexity. The same construction works for our purposes, so we first recall the details. Let $p > 0$, and let $\alpha_1, \dots, \alpha_s$ be pairwise distinct nonzero elements of $\bbF_{2^p}$. The matrix
\[
  \begin{pmatrix}
    \alpha_1^1 & \alpha_1^2 & \cdots & \alpha_1^s \\
    \alpha_2^1 & \alpha_2^2 & \cdots & \alpha_2^s \\
    \vdots & \vdots & \ddots & \vdots \\
    \alpha_s^1 & \alpha_s^2 & \cdots & \alpha_s^s
  \end{pmatrix}
\]
has full rank over $\bbF_{2^p}$. Fix a basis of $\bbF_{2^p}$ as an extension over $\bbF_2$. The matrix $V$ obtained by expanding each entry of the above matrix in this basis then has full row rank $s$ over $\bbF_2$.

For sufficiently large $n$, choose $p = \lceil \log_2 n \rceil$, $s = \lceil \sqrt{n} \rceil$ and $m = ps$, and let $U \in \bbF_2^{(n-m) \times m}$ be the matrix obtained by choosing and expanding $n - m$ different elements of $\bbF_{2^p}$ as above, such that each chosen $\alpha \in \bbF_{2^p}$ gives rise to a row $(\alpha, \alpha^2, \dots, \alpha^s)$, expanded in the chosen basis. Then by the above observation, $\ind(U) \geq s$. The hard family of \cite{Sergeev2025} is the following (in which we have permuted the rows to match the above construction).
\begin{thm}[{\cite[Cor. 1]{Sergeev2025}}]
  \label{thm:sergeev-main}
  The explicit family $A_U = \begin{psmallmatrix} U & 0 \\ 0 & \Ut \end{psmallmatrix} \in \bbF_2^{n \times n}$ has additive complexity
  \[
    \additive(A_U) \geq 5n - O(\sqrt{n} \log n).
  \]
\end{thm}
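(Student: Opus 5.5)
Since this statement is quoted from \cite[Cor.~1]{Sergeev2025}, we could simply cite it. If we had to prove it, we would use \emph{gate elimination} on a statement strong enough to survive induction. We would not prove the bound for $A_U$ alone. Instead we would consider the class of block matrices $\begin{psmallmatrix} U' & 0 \\ 0 & V' \end{psmallmatrix}$ in which the rows of $U'$ have independence index at least $k$ and $V'$ is obtained from $\Ut$ by restrictions. We would then prove a bound of the form $\additive \geq 5\cdot(\text{number of inputs}) - O(\text{number of steps at which } k \text{ drops})$. With $k = s = \lceil\sqrt n\rceil$ and each drop costing $O(p) = O(\log n)$, the error term would come out as $O(\sqrt n \log n)$.

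The key steps are as follows. First, the hypothesis $\ind(U) \geq s$ with $s \geq 3$ implies that every column of $\Ut$ is nonzero, distinct from every other column, and not the sum of two others. The same holds, with $\ind \geq 2$, for the rows of $U$. So in any optimal circuit no output is a bare input, and every output of the $\Ut$ part depends on several inputs. Second, take an optimal circuit and look at a topologically minimal gate $g = x_i \oplus x_j$, together with the fan-outs of $x_i$ and $x_j$. In the $U$ block we would substitute a suitable linear form for an input, or set it to $0$. In the $\Ut$ block we would apply the dual operation, deleting an output row together with the corresponding restriction. We would arrange each substitution to remove one variable from each block simultaneously, so that the two blocks are paired and each step reduces $n$ by about $2$ while eliminating about $10$ gates.

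Third, we must check that after each restriction the remaining matrix stays in the class. Setting inputs of $U$ to zero deletes columns; this may lower the independence index of the rows by at most one per step, so it can drop only $O(\sqrt n)$ times before $k$ hits a constant. Whenever the index drops, we simply accept a loss of $O(\log n)$ gates, namely one field element's worth of $p$ columns. Summing the gains and these losses over the whole induction gives $5n - O(\sqrt n\log n)$.

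The main obstacle is the case analysis behind the second step: showing that \emph{every} step eliminates at least $5$ gates per input removed, amortised, including the degenerate cases. These are cases where the fan-out of the chosen inputs is only $1$, or where eliminated gates coincide or feed each other. For these we would first try a potential function of the form ``gates $-\,5\cdot$inputs $+$ (number of fan-out-one inputs)''. We would also use the transposition principle $\additive(\Ut) = \additive(U) + (n-m) - m$ to transfer the gates saved on the $U$ side to the $\Ut$ side. The expectation is that the transposed block forces extra fan-out exactly where the $U$ block alone would allow cheap gates.
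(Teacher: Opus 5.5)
Citing \cite[Cor.~1]{Sergeev2025}, as you first suggest, is exactly what the paper does. The gate-elimination argument you sketch in its place, however, has a structural gap, not merely an unfinished case analysis.

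\textbf{The scheme runs out of steps.} Here $U$ is $a\times m$ with $a=n-m$ and $m=ps=O(\sqrt n\log n)$. So the $U$ block has only $m=o(n)$ inputs, and the $\Ut$ block only $m$ outputs. Every restriction you describe deletes a column of $U$: substituting for an input of the $U$ block does, and so does dually deleting an output row of the $\Ut$ block. Each of your paired steps therefore consumes one of those $m$ inputs. The induction stops after at most $m$ steps and, even at $10$ gates per step, accounts for only $O(\sqrt n\log n)$ gates. Your ``each step reduces $n$ by about $2$'' implicitly treats the two blocks as balanced. In fact essentially all of $5n$ must be charged to the $a$ rows of $U$, which your scheme never removes.

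\textbf{The invariant is false.} After deleting column $c$, a set of at most $k$ rows becomes dependent exactly when some nonempty subset of it sums to $e_c$. So the index becomes $\min\{k,t-1\}$, where $t$ is the smallest size of such a subset. This can be far below $k-1$; a weight-one row gives $0$. Note also that $\ind(U)\ge2$ does not exclude weight-one rows; that needs the weight bound $\ge s$ from the construction.

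\textbf{The core step is missing.} The step carrying all the content, $5$ gates per removed input through the fan-out-one and cancellation cases, is left as an expectation.

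The cited bound comes from a different mechanism, visible in \eqref{eq:additive-bound}. It combines three ingredients:
the splitting inequality $\additive(A_U)\ge\additive(\Ut)+\reduced(U)$ \cite[Lem.~1]{Sergeev2025};
the transposition principle $\additive(\Ut)=\additive(U)+a-m$;
and $\additive(U)\ge\reduced(U)$.
Together these give $\additive(A_U)\ge2\reduced(U)+a-m$.

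The independence index is then used once, globally, on the tall matrix $U$. Apply Theorem~\ref{thm:sergeev-workhorse} with $2k+2\le s$, so $k\approx\sqrt n/2$. Together with $a^{k/(k+1)}\ge a\bigl(1-\ln a/(k+1)\bigr)$, it yields $\reduced(U)\ge2a-m-O(\sqrt n\log n)$. Hence $\additive(A_U)\ge5a-3m-O(\sqrt n\log n)=5n-O(\sqrt n\log n)$.

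Thus the $5n$ is $2a$ from each of the two copies of $\reduced(U)$ plus $a$ from transposition. All of it is indexed by the rows of $U$, which is precisely the part your induction does not reach.
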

\begin{thm}
  There is an unbounded set $I\subseteq\mathbb N$ and an explicit family $\{M_N\}_{N\in I}$, with $M_N\in\mathrm{GL}_N(\bbF_2)$ and $d(M_N)\geq4N-o(N)$.
\end{thm}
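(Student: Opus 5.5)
We combine Theorem~\ref{thm:sergeev-main} with Lemmas~\ref{lem:add-permutation}, \ref{lem:permutation-matrix}, \ref{lem:embedding} and~\ref{lem:submatrix}. For each sufficiently large $n$, take $U \in \bbF_2^{(n-m)\times m}$ as constructed above, with $a = n-m$ rows, $b = m = p s = O(\sqrt n \log n)$ columns, and $p = \lceil \log_2 n\rceil$, $s=\lceil\sqrt n\rceil$. We need $n - m \le 2^p - 1$ distinct nonzero elements of $\bbF_{2^p}$; since $2^p \ge n$, this holds.

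Form $M_0 = A_U + P_0$ as in Lemma~\ref{lem:permutation-matrix}. By Lemma~\ref{lem:add-permutation} and Theorem~\ref{thm:sergeev-main},
\[
\additive(M_0) \ge \additive(A_U) - n \ge 4n - O(\sqrt n\log n).
\]
By Lemma~\ref{lem:permutation-matrix}, $\mathrm{corank}(M_0) \le b = O(\sqrt n \log n)$.

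Next apply Lemma~\ref{lem:embedding} to $M_0$. This gives an invertible $M_N \in \mathrm{GL}_N(\bbF_2)$ with $N = n + \mathrm{corank}(M_0) = n + O(\sqrt n\log n)$, containing $M_0$ as its top-left block. Lemma~\ref{lem:submatrix} and the proposition $\additive \le d$ then give
\[
d(M_N) \ge \additive(M_N) \ge \additive(M_0) - 1 \ge 4n - O(\sqrt n \log n) = 4N - O(\sqrt N\log N),
\]
since $N - n = o(n)$. This is~\eqref{eq:embedding-bound} with $c=4$ and $\gamma_n = O(\log n/\sqrt n) \to 0$. Let $I$ be the set of values of $N$ obtained as $n$ ranges over large integers. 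It is unbounded because $N \ge n$. If the same $N$ arises from several $n$, we fix one choice, say the smallest $n$.

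The remaining point, which I expect to be the main obstacle, is explicitness. Lemma~\ref{lem:embedding} is only existential, so we must show $M_N$ can be built deterministically in time polynomial in $N$. Constructing $U$ needs an irreducible polynomial of degree $p$ over $\bbF_2$. It can be found deterministically in $\mathrm{poly}(2^p) = \mathrm{poly}(n)$ time by brute force with an irreducibility test, taking the lexicographically first one. The nonzero field elements are then taken in a fixed order, and the expansion in the monomial basis is clearly polynomial time. For the embedding, Gaussian elimination on $M_0$ computes a basis of its column space, which we extend to a basis of $\bbF_2^n$ by greedily adding standard vectors $e_1, e_2, \dots$. Similarly we extend the row space. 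The $s$ added column vectors form the extra columns, the $s$ added row vectors form the extra rows, and the bottom-right block is set to, say, the identity. We still need to check that the resulting $M_N$ is invertible. If the proof of Lemma~\ref{lem:embedding} differs, I would simply follow its construction step by step, verifying that each step (rank computation, basis extension) is Gaussian elimination, hence $O(n^3)$. The result is that each entry of $M_N$ is computable in polynomial time, which completes the argument.
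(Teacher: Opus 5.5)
Your proposal is correct and follows the paper's proof exactly: Lemma~\ref{lem:add-permutation} turns Sergeev's $5n$ bound into $4n - O(\sqrt n\log n)$ for $M_0$, Lemma~\ref{lem:permutation-matrix} bounds the corank by $b = O(\sqrt n \log n)$, and padding via Lemmas~\ref{lem:embedding} and~\ref{lem:submatrix} gives $4N - o(N)$, with your explicitness discussion spelling out what the paper states in one line. The invertibility check you left open holds for any bottom-right block $D$: if $M_N$ sends $(x,y)$ to $0$, then $M_0x = Cy$ lies in the trivial intersection of the two column spaces, so $y=0$, and then $M_0x = 0 = Rx$ forces $x=0$ because the rows of $M_0$ and $R$ together span $\bbF_2^n$.
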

\begin{proof}
  Let $A_U = \begin{psmallmatrix} U & 0 \\ 0 & \Ut \end{psmallmatrix}$ be the family of matrices from Theorem~\ref{thm:sergeev-main} and let $M_0 = \begin{pmatrix} U & I_a \\ I_b & \Ut \end{pmatrix}$, where $a = n - m$, $b = m$. By Lemma~\ref{lem:add-permutation} and Theorem~\ref{thm:sergeev-main}, we have
  \[
    \additive(M_0) \geq \additive(A_U) - n \geq 4n - O(\sqrt{n} \log n).
  \]
  Now $M_0$ need not be invertible a priori, but by Lemma~\ref{lem:permutation-matrix}, we have $\mathrm{corank}(M_0) \leq b = O(\sqrt{n} \log n)$, so as in Lemma~\ref{lem:embedding}, we can embed $M_0$ as a submatrix of an invertible matrix $M_N \in \mathrm{GL}_N(\bbF_2)$, so as in \eqref{eq:embedding-bound}, we get
  \[
    d(M_N) \geq 4N - o(N).
  \]
  Finally, note that every matrix $M_N$, $N\in I$, is explicit, since each step of the construction may be performed in time polynomial in $N$.
\end{proof}

\section{An explicit hard instance}
In fact, we may trace through the construction to find a single explicit hard matrix; in the above, $s$ and $p$ are chosen to simplify the asymptotic analysis, but by appealing to Theorem~\ref{thm:sergeev-workhorse} directly, we can search for low values of $p$ and $s$ that give a matrix with high additive complexity.

To this end, we will build on the main technical result of \cite{Sergeev2025} that allows us to determine lower bounds for additive complexity. To do this, we need to introduce some notation. For $U \in \bbF_2^{a \times m}$, let $\reduced(U)$ denote its \emph{reduced complexity} \cite[§2]{Sergeev2025}. As above, let $A_U = \begin{psmallmatrix} U & 0 \\ 0 & \Ut \end{psmallmatrix} \in \bbF_2^{n \times n}$, $n = a + m$. Then by \cite[Lem.~1]{Sergeev2025} and \cite[Prop.~1]{Sergeev2025}, and the fact that $\reduced(U) \leq \additive(U)$, we have
\begin{align}
  \label{eq:additive-bound}
    \additive(A_U) \geq \additive(\Ut) + \reduced(U) = \additive(U) + (a - m) + \reduced(U) \geq 2 \reduced(U) + (a - m).
\end{align}
Let the \emph{weight} of a row be the number of nonzero entries in that row, and recall the following technical result:
\begin{thm}[{\cite[Thm. 2]{Sergeev2025}}]
  \label{thm:sergeev-workhorse}
  Let $m \leq a$, and let $B \in \bbF_2^{a \times m}$ have no rows of weight $1$ and satisfy $\ind(B) \geq 2k+2\geq 6$. Then
  \[
    \reduced(B) \geq a + \frac{2k-2}{2k+1} a^{k/(k+1)} - m.
  \]
\end{thm}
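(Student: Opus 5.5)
The statement is \cite[Thm.~2]{Sergeev2025}, so within this paper we simply invoke it. If I had to reprove it, I would use a counting argument on an optimal circuit for the reduced complexity $\reduced(B)$, turning the independence hypothesis $\ind(B)\geq 2k+2$ into a high-girth condition on an auxiliary graph. I have not checked the precise definition of reduced complexity in \cite[§2]{Sergeev2025}, so the details below depend on it.

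\emph{Step 1: the baseline term $a-m$.} Fix an optimal circuit realising $\reduced(B)$. Since no row of $B$ has weight $1$, no output is a bare input. If $\ind(B)\geq 2$, the $a$ rows are pairwise distinct, so the $a$ outputs are distinct gate values. Since $m\leq a$ and the outputs span a space of dimension at most $m$, at least $a-m$ outputs lie in the span of the others and must be produced by gates beyond those creating new dimensions. This should give the term $a-m$.

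\emph{Step 2: the excess term via a graph.} To gain the extra $\tfrac{2k-2}{2k+1}a^{k/(k+1)}$, I would show that a circuit with only about $a-m+t$ gates forces many outputs to be computed cheaply. Such outputs would be sums $u\oplus v$ of two values that are themselves outputs, inputs, or values shared by few outputs. I would encode this as a graph $G$: its vertices are the inputs together with a small set of intermediate values, and each cheaply computed output is an edge joining its two operands.

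A cycle of length $\ell$ in $G$ would give a vanishing sum of at most $\ell$ outputs, that is, a linear dependence among at most $\ell$ rows of $B$. Hence $\ind(B)\geq 2k+2$ forces $G$ to have girth greater than $2k+2$, after a little care with paths that pass through intermediate values of bounded depth.

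\emph{Step 3: the extremal bound.} A graph on $N$ vertices with girth greater than $2k$ has at most roughly $N+N^{1+1/k}$ edges (a Moore or Alon--Hoory--Linial type bound). Here $G$ has about $a$ edges and about $m+t$ vertices. Since the number of vertices cannot be much larger than $m$ unless the circuit pays for it, balancing the edge count against the vertex count forces $t\gtrsim a^{k/(k+1)}$. The constant $\tfrac{2k-2}{2k+1}$ would then come from optimising this trade-off.

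\emph{Main obstacle.} The hardest step is the reduction in Step 2. A general circuit is not a graph of single sums: gates can be reused by many outputs, and intermediate values may have large depth. I would first try gate elimination, repeatedly removing a gate of fanout at least $2$ or a variable, and charging the removed gates. This aims to reach a normal form in which every remaining output is a sum of two "atoms". It must also keep the independence index and the no-weight-$1$ condition intact up to losses of lower order than $a^{k/(k+1)}$.
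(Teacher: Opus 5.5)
Your proposal matches the paper: the paper gives no proof of this statement and imports it verbatim as \cite[Thm.~2]{Sergeev2025}, so invoking the citation is exactly its justification. Your reproof sketch can only count as heuristic. As written it treats $\reduced(B)$ as an ordinary gate count (for instance, taking the $a$ outputs to be distinct gate values), and since $\reduced(B)\leq\additive(B)$, such an argument would at best bound $\additive(B)$ --- not the quantity $\reduced(U)$ that \eqref{eq:additive-bound} actually needs.
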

Now, return to the construction of the previous section. Let $p = 14$, and for our construction of $U$, choose all $a = 2^p - 1 = 16383$ nonzero elements of $\bbF_{2^p}$, let $s = 56$, and $m = sp = 784$, expanding in the polynomial basis $\{1, x, \dots, x^{13}\}$ of $\bbF_{2^{14}} = \bbF_2[x] / (x^{14} + x^5 + x^3 + x + 1)$. Construct $U \in \bbF_2^{a \times m}$ as above, so $\ind(U) \geq 56$.

\begin{prop}
  \label{prop:explicit-hard-instance}
  Let $U$ be the matrix just defined and let
  \[
    M_0=\begin{pmatrix}U&I_a\\I_m&\Ut\end{pmatrix}\in\bbF_2^{17167\times17167}.
  \]
  Then $M_0\in\mathrm{GL}_{17167}(\bbF_2)$ and
  \[
    d(M_0)\geq51535>3(17167-1)=51498.
  \]
\end{prop}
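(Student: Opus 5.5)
The plan has two parts: prove invertibility by showing $\Ut U = 0$, and then chain the additive-complexity lower bounds already available.

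\textbf{Invertibility.} By Lemma~\ref{lem:permutation-matrix}, $\ker M_0 \cong \ker(I_m + \Ut U)$. So it suffices to show $\Ut U = 0$, since then $I_m + \Ut U = I_m$ and the kernel is trivial.

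The $(k,l)$ entry of $\Ut U$ inside the $(i,j)$ block, $1 \le i,j \le 56$, is $\sum_{\alpha \neq 0} c_k(\alpha^i)\, c_l(\alpha^j)$. Here $c_k$ is the $k$-th coordinate in the polynomial basis. I would write each coordinate functional as a trace, $c_k(\beta) = \mathrm{Tr}(\gamma_k \beta)$, with $\{\gamma_k\}$ the trace-dual basis. Expanding both traces as sums of Frobenius conjugates gives
\[
  \sum_{r,t} \gamma_k^{2^r}\gamma_l^{2^t} \sum_{\alpha \neq 0} \alpha^{i2^r + j2^t}.
\]
The inner sum vanishes unless $i2^r + j2^t \equiv 0 \pmod{2^{14}-1}$.

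That congruence is equivalent to $i \cdot 2^{u} \equiv -j$ for some $u$. Multiplication by $2^u$ modulo $2^{14}-1$ cyclically rotates the $14$-bit representation and so preserves Hamming weight. The number $-j \equiv 2^{14}-1-j$ is the bitwise complement of $j$. Since $i, j \le 56 < 64$, both have weight at most $6$ (in fact at most $5$ in this range). The complement of $j$ therefore has weight at least $8$, so no solution exists. Hence $\Ut U = 0$ and $M_0 \in \mathrm{GL}_{17167}(\bbF_2)$.

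If the trace argument gets delicate, the fallback is a direct computer check of $\Ut U$, or of the rank of $M_0$. That check is explicit and cheap.

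\textbf{Lower bound.} The chain of inequalities is
\[
  d(M_0) \ge \additive(M_0) \ge \additive(A_U) - n \ge 2\reduced(U) + (a-m) - n .
\]
The steps use, in order, the proposition $\additive \le d$, Lemma~\ref{lem:add-permutation} with $M_0 = A_U + P_0$, and \eqref{eq:additive-bound}.

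Then I apply Theorem~\ref{thm:sergeev-workhorse} to $B = U$, which requires two hypotheses.
\begin{itemize}
\item \emph{No weight-one rows.} Every $\alpha \neq 0$ has all powers $\alpha^i \neq 0$, so each of the $56$ blocks of a row is nonzero. Every row therefore has weight at least $56$.
\item \emph{Independence index.} We have $\ind(U) \ge 56 \ge 2k+2$ for any $k \le 27$. Also $m = 784 \le a = 16383$.
\end{itemize}
This gives
\[
  d(M_0) \ge 2\Bigl(a - m + \tfrac{2k-2}{2k+1}\, a^{k/(k+1)}\Bigr) + a - m - (a+m) = 2a - 4m + 2\cdot\tfrac{2k-2}{2k+1}\, a^{k/(k+1)} .
\]

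\textbf{Numerical step (main obstacle).} The remaining work, which I expect to be the main obstacle, is checking that some $k \le 27$ makes the right-hand side at least $51535$. The constant term is $2a - 4m = 32766 - 3136 = 29630$, so the required contribution is $\tfrac{2k-2}{2k+1}\, a^{k/(k+1)} \ge 10953$.

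For $k = 27$, $a^{27/28} = 2^{(\log_2 16383)\cdot 27/28} \approx 2^{13.5} \approx 1.16 \cdot 10^4$, and the prefactor is $52/55$. This is very close to the threshold, so I would evaluate each $k \le 27$ numerically with care and take the maximum, keeping track of the rounding (the bound on $d$ must be an integer). The final step is to confirm $51535 > 51498 = 3 \cdot 17166$.
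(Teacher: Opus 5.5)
Your lower-bound chain is the same as the paper's. It uses $d\geq\additive$, Lemma~\ref{lem:add-permutation} for $M_0=A_U+P_0$, inequality~\eqref{eq:additive-bound}, and Theorem~\ref{thm:sergeev-workhorse} with $k=27$, justified by row weight $\geq 56$ and $\ind(U)\geq 56$.

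The real difference is in invertibility. The paper just reports an exact machine computation of $\mathrm{rank}(I_{784}+\Ut U)=784$. You instead prove the stronger identity $\Ut U=0$. You write the coordinate functionals as traces, reduce to power sums $\sum_{\alpha\neq0}\alpha^{i2^r+j2^t}$, and rule out $i2^u\equiv -j \pmod{2^{14}-1}$: a cyclic rotation of $i$ has binary weight at most $5$, while the complement of $j$ has weight at least $9$. This argument is correct. It amounts to saying that this primitive BCH code contains its dual, which is exactly the property used to build CSS codes from BCH codes.

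What your route buys is a human-checkable proof with no computation. It also applies verbatim to other parameters, as long as all nonzero field elements are used and $\mathrm{wt}(i)+\mathrm{wt}(j)<p$ for $1\leq i,j\leq s$. The paper's check needs no theory, but it is tied to this single matrix.

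The one thing you must fix is the numerical step, which you leave undone and with the wrong threshold. With $X=\frac{2k-2}{2k+1}a^{k/(k+1)}$ you need $29630+2X\geq 51535$, i.e.\ $X\geq 10952.5$, not $X\geq 10953$. This matters because the margin is tiny:
\begin{itemize}
\item $X$ increases with $k$, and the available bound $\ind(U)\geq 56$ caps $k$ at $27$, so $k=27$ is the best you can do.
\item At $k=27$ one has $a^{27/28}\approx 11584.56$, so $X=\frac{52}{55}a^{27/28}\approx 10952.67$.
\item Your stated criterion therefore fails, while the correct one holds: $d(M_0)\geq 29630+21905.34=51535.34$, so $d(M_0)\geq 51535$ (indeed $\geq 51536$ by integrality). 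This is consistent with the paper's $\additive(A_U)\geq 68702$.
\end{itemize}
Your rough estimate $a^{27/28}\approx 2^{13.5}\approx 11585.24$ is too high by about $0.7$. That exceeds the available slack of about $0.18$ in $a^{27/28}$, so this evaluation has to be carried out to a couple of decimal places rather than estimated.
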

\begin{proof}
Each row of $U$ has weight at least $s$, since each of its $s$ nonzero field-element blocks contributes at least one nonzero binary coordinate. Thus, by~\eqref{eq:additive-bound} and Theorem~\ref{thm:sergeev-workhorse}, applied with $k = 27$, we have
\[
    \additive(A_U) \geq 2 \reduced(U) + (a - m) \geq 2\left(a + \frac{2k-2}{2k+1}a^{k/(k+1)} - m\right) + (a - m) \geq 68702.
\]
By Lemma~\ref{lem:add-permutation},
\[
    \additive(M_0) \geq \additive(A_U) - n \geq 51535.
\]
An exact computation over $\bbF_2$ gives $\mathrm{rank}(I_{784}+\Ut U)=784$. Lemma~\ref{lem:permutation-matrix} therefore shows that $M_0$ is invertible. Finally, $d(M_0)\geq\additive(M_0)\geq51535$, while $3(17167-1)=51498$.
\end{proof}
Thus $M_0$ requires strictly more CNOT gates to synthesize than the cyclic permutation on $17167$ symbols. The matrix itself is illustrated in Figure~\ref{fig:hard-instance}.
\begin{figure}[ht]
    \centering
    \includegraphics[width=0.6\textwidth]{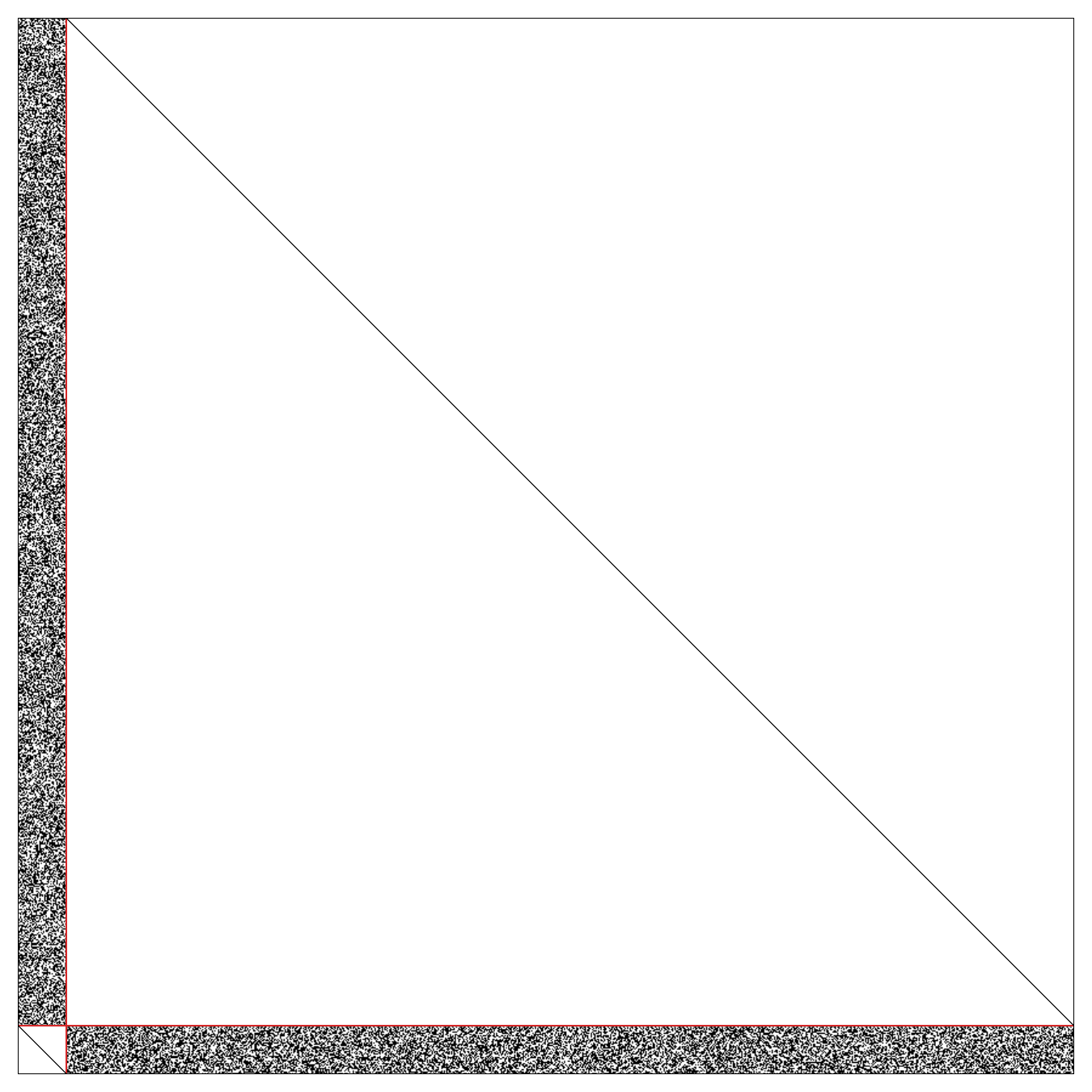}
    \caption{The explicit hard instance $M_0 \in \GLn$, $n = 17167$, for which we have $d(M_0) \geq 51535 > 3(n-1) = 51498$. The matrix is constructed from the parity-check matrix of the BCH code with $p = 14$ and $s = 56$.}
    \label{fig:hard-instance}
\end{figure}
\begin{rem}
  Recall that it is known \cite{Bataille2022,Christensen2025} that there is a hard instance in $\GLn$ for some $n$ with $8 \leq n \leq 20$, whereas the instance provided above has $n = 17167$. Similarly, the lower bound $d(M_0) \geq 51535$ is likely far from tight; running the synthesis algorithm of \cite{Patel2008}, splitting $M_0$ into 8 sections, yields a circuit with $40{,}015{,}039$ CNOT gates, so $51{,}535 \leq d(M_0) \leq 40{,}015{,}039$.
\end{rem}
\begin{rem}
  In the above, we have focused on the model of CNOT-complexity where the only allowed operation is the CNOT gate. Another common model of practical interest is that in which swapping two qubits is considered a free operation; in other words, given a matrix $A \in \GLn$, the goal is still to reduce $A$ to the identity using as few row additions as possible, but we are allowed to permute rows at no cost. Since permutations can always be collected at one end of a given circuit, without changing the number of row additions, the relevant complexity measure is the minimum
  \[
    d_{\mathrm{SWAP}}(A) = \min \{ d(PA) \mid P \text{ a permutation matrix} \}.
  \]
  As swapping is considered a free operation in the case of additive complexity, we have $\additive(A) = \additive(PA)$ for any permutation matrix $P$, and therefore
  \[
    d(A) \geq d_{\mathrm{SWAP}}(A) \geq \additive(A)
  \]
  for any $A \in \GLn$. The lower bounds established for explicit families in this work rely only on lower bounds for additive complexity and therefore hold in this model as well.
\end{rem}

The motivation for using parity-check matrices in the construction is the general fact that any linear code with minimum distance $d$ and parity-check matrix $H$ has $d = \ind(H^{\!\top}) + 1$, so the construction should be seen as suggesting that higher distance leads to harder instances.

In the context of quantum error correction, circuits are synthesized for encoding and syndrome extraction, and having to synthesize a circuit with a large number of entangling gates is undesirable, as it increases the overall error rate. A natural extension of the present work is to investigate whether similar arguments may be used to achieve lower bounds on circuit sizes for quantum error correcting codes obtained from classical linear codes, such as CSS codes.

\subsection*{Code availability}
The code for verifying the computational claims above is available at\\\url{https://github.com/kvantify/paper-lower-bounds-cnot}.

\bibliographystyle{is-alpha}
\bibliography{references}

@article{Amy2013,
  author    = {Matthew Amy and
               Dmitri Maslov and
               Michele Mosca and
               Martin Roetteler},
  title     = {{A Meet-in-the-Middle Algorithm for Fast Synthesis of Depth-Optimal
               Quantum Circuits}},
  journal   = {{IEEE} Trans. Comput. Aided Des. Integr. Circuits Syst.},
  volume    = {32},
  number    = {6},
  pages     = {818--830},
  year      = {2013},
  url       = {https://doi.org/10.1109/TCAD.2013.2244643},
  doi       = {10.1109/TCAD.2013.2244643},
  bibsource = {dblp computer science bibliography, https://dblp.org}
}

@article{Amy2018,
  doi = {10.1088/2058-9565/aad8ca},
  url = {https://doi.org/10.1088/2058-9565/aad8ca},
  year = {2018},
  publisher = {IOP Publishing},
  volume = {4},
  number = {1},
  pages = {015002},
  author = {Amy, Matthew and Azimzadeh, Parsiad and Mosca, Michele},
  title = {{On the controlled-NOT complexity of controlled-NOT--phase circuits}},
  journal = {Quantum Science and Technology}
}

@article{Bataille2022,
    title = {{Quantum Circuits of {CNOT} gates: Optimization and Entanglement}},
    shorttitle = {Quantum Circuits of \$\${\textbackslash}texttt \{\vphantom\}{{CNOT}}\vphantom\{\}\$\$gates},
    author = {Bataille, Marc},
    year = {2022},
    month = jul,
    journal = {Quantum Information Processing},
    volume = {21},
    number = {7},
    pages = {269},
    issn = {1573-1332},
    doi = {10.1007/s11128-022-03577-8},
    url = {https://doi.org/10.1007/s11128-022-03577-8},
}

@inproceedings{Brugiere2020,
  author       = {Timoth{\'{e}}e Goubault de Brugi{\`{e}}re and Marc Baboulin and Beno{\^{\i}}t Valiron and Simon Martiel and Cyril Allouche},
  title        = {{Quantum {CNOT} Circuits Synthesis for {NISQ} Architectures Using the Syndrome Decoding Problem}},
  booktitle    = {Reversible Computation - 12th International Conference, {RC} 2020,
                  Oslo, Norway, July 9-10, 2020, Proceedings},
  series       = {Lecture Notes in Computer Science},
  volume       = {12227},
  pages        = {189--205},
  publisher    = {Springer},
  year         = {2020},
  url        = {https://doi.org/10.1007/978-3-030-52482-1\_11},
  doi          = {10.1007/978-3-030-52482-1\_11},
  bibsource    = {dblp computer science bibliography, https://dblp.org}
}

@article{Brugiere2021,
    author = {De Brugi\`{e}re, Timoth\'{e}e Goubault and Baboulin, Marc and Valiron, Beno\^{\i}t and Martiel, Simon and Allouche, Cyril},
    title = {{Gaussian Elimination versus Greedy Methods for the Synthesis of Linear Reversible Circuits}},
    year = {2021},
    issue_date = {September 2021},
    publisher = {Association for Computing Machinery},
    address = {New York, NY, USA},
    volume = {2},
    number = {3},
    url = {https://doi.org/10.1145/3474226},
    doi = {10.1145/3474226},
    journal = {ACM Transactions on Quantum Computing},
    month = {sep},
    articleno = {11},
    numpages = {26}
}

@ARTICLE{Bu2025,
  title     = "Minimum synthesis cost of {CNOT} circuits",
  author    = "Bu, Alan and Fan, Evan and Joo, Robert",
  journal   = "Quantum Information Processing",
  publisher = "Springer Science and Business Media LLC",
  volume    =  24,
  number    =  7,
  month     =  jul,
  year      =  2025,
  copyright = "https://www.springernature.com/gp/researchers/text-and-data-mining",
  language  = "en",
  doi       = "10.1007/s11128-025-04831-5",
  url       = "https://doi.org/10.1007/s11128-025-04831-5"
}

@INCOLLECTION{Christensen2025,
  title     = "On exact sizes of minimal {CNOT} circuits",
  booktitle = "Lecture Notes in Computer Science",
  author    = "Christensen, Jens Emil and J{\o}rgensen, S{\o}ren Fuglede and
               Pavlogiannis, Andreas and van de Pol, Jaco",
  publisher = "Springer Nature Switzerland",
  pages     = "71--88",
  series    = "Lecture Notes in Computer Science",
  year      =  2025,
  address   = "Cham",
  copyright = "https://www.springernature.com/gp/researchers/text-and-data-mining",
  language  = "en",
  doi       = "10.1007/978-3-031-97063-4\_6",
  url       = "https://doi.org/10.1007/978-3-031-97063-4\_6"
}

@ARTICLE{Gheorghiu2023,
  author={Gheorghiu, Vlad and Huang, Jiaxin and Li, Sarah Meng and Mosca, Michele and Mukhopadhyay, Priyanka},
  journal={IEEE Transactions on Computer-Aided Design of Integrated Circuits and Systems}, 
  title={{Reducing the {CNOT} Count for {Clifford+T} Circuits on {NISQ} Architectures}}, 
  year={2023},
  volume={42},
  number={6},
  pages={1873-1884},
  doi={10.1109/TCAD.2022.3213210},
  url={https://doi.org/10.1109/TCAD.2022.3213210}
}

@MISC {Gowers2011,
    TITLE = {What is the complexity of this problem?},
    YEAR = {2011},
    AUTHOR = {Timothy Gowers},
    HOWPUBLISHED = {MathOverflow},
    EPRINT = {https://mathoverflow.net/q/69873},
    URL = {https://mathoverflow.net/q/69873}
}

@ARTICLE{Jukna2013,
  title     = "Complexity of linear Boolean operators",
  author    = "Jukna, Stasys and Sergeev, Igor",
  journal   = "Found. Trends Theor. Comput. Sci.",
  publisher = "Emerald",
  volume    =  9,
  number    =  1,
  pages     = "1--123",
  month     =  oct,
  year      =  2013,
  language  = "en",
  isbn = " 1601987269",
  doi = "10.1561/0400000063",
  url = "https://doi.org/10.1561/0400000063"
}

@inproceedings{Kazuo2002,
author = {Iwama, Kazuo and Kambayashi, Yahiko and Yamashita, Shigeru},
title = {{Transformation rules for designing CNOT-based quantum circuits}},
year = {2002},
isbn = {1581134614},
publisher = {Association for Computing Machinery},
address = {New York, NY, USA},
url = {https://doi.org/10.1145/513918.514026},
doi = {10.1145/513918.514026},
booktitle = {Proceedings of the 39th Annual Design Automation Conference},
pages = {419--424},
numpages = {6},
location = {New Orleans, Louisiana, USA},
series = {DAC '02}
}

@inproceedings{Meuli2018,
  author       = {Giulia Meuli and Mathias Soeken and Giovanni De Micheli},
  title        = {{{SAT}-based \{CNOT, T\} Quantum Circuit Synthesis}},
  booktitle    = {Reversible Computation - 10th International Conference, {RC} 2018,
                  Leicester, UK, September 12-14, 2018, Proceedings},
  series       = {Lecture Notes in Computer Science},
  volume       = {11106},
  pages        = {175--188},
  publisher    = {Springer},
  year         = {2018},
  url          = {https://doi.org/10.1007/978-3-319-99498-7\_12},
  doi          = {10.1007/978-3-319-99498-7\_12},
  bibsource    = {dblp computer science bibliography, https://dblp.org}
}

@article{Patel2008,
    author = {Patel, Ketan N. and Markov, Igor L. and Hayes, John P.},
    title = {Optimal synthesis of linear reversible circuits},
    year = {2008},
    issue_date = {March 2008},
    publisher = {Rinton Press, Incorporated},
    address = {Paramus, NJ},
    volume = {8},
    number = {3},
    issn = {1533-7146},
    journal = {Quantum Info. Comput.},
    pages = {282--294},
    numpages = {13},
}

@article{Schneider2022,
  title={{A {SAT} Encoding for Optimal {Clifford} Circuit Synthesis}},
  author={Sarah Schneider and Lukas Burgholzer and Robert Wille},
  journal={2023 28th Asia and South Pacific Design Automation Conference (ASP-DAC)},
  year={2022},
  pages={190-195},
  doi = {10.1145/3566097.3567929},
  url={https://doi.org/10.1145/3566097.3567929}
}

@ARTICLE{Sergeev2025,
  title     = "{Lower Bounds for Additive Complexity of Linear Operators and Bilinear Algorithms for Matrix and Polynomial Multiplication {$GF(2)$}}",
  author    = "Sergeev, Igor S.",
  journal   = "Math. Notes",
  publisher = "Pleiades Publishing Ltd",
  volume    =  118,
  number    = "3-4",
  pages     = "848--862",
  month     =  oct,
  year      =  2025,
  copyright = "https://www.springernature.com/gp/researchers/text-and-data-mining",
  language  = "en",
  doi = {10.1134/S0001434625605143},
  url = {https://doi.org/10.1134/S0001434625605143}
}

@inproceedings{Shaik2024,
  author       = {Irfansha Shaik and
                  Jaco van de Pol},
  title        = {{Optimal Layout-Aware {CNOT} Circuit Synthesis with Qubit Permutation}},
  booktitle    = {{ECAI}},
  series       = {Frontiers in Artificial Intelligence and Applications},
  volume       = {392},
  pages        = {4207--4215},
  publisher    = {{IOS} Press},
  year         = {2024},
  doi = {10.3233/FAIA240993},
  url = {https://doi.org/10.3233/FAIA240993}
}

@MISC{Webster2025,
  title         = "Heuristic and optimal synthesis of {CNOT} and {Clifford}
                   circuits",
  author        = "Webster, Mark and Koutsioumpas, Stergios and Browne, Dan E",
  month         =  mar,
  year          =  2025,
  copyright     = "http://creativecommons.org/licenses/by/4.0/",
  archivePrefix = "arXiv",
  primaryClass  = "quant-ph",
  eprint        = "2503.14660",
  url = {https://doi.org/10.48550/arXiv.2503.14660},
}

\end{document}